\newtheorem{definition}{Definition}
\newtheorem{example}{Example}
\newtheorem{proposition}{Proposition}
\newtheorem{remark}{Remark}
\newenvironment{proof}{\begin{trivlist}
                       \item[]{\bf Proof}
                       \hspace{0cm} }{\hfill {\large $\bullet$}
                       \end{trivlist}}
\newcommand{\be}{\begin{eqnarray}}
\newcommand{\ee}{\end{eqnarray}}
\newcommand{\LL}{{\cal L}}
\newcommand{\ignore}[1]{}
\newcommand{\der}{\rightarrow}
\newcommand{\derstarG}{\Longrightarrow  \hspace*{-5mm} {}^{{}^{*}}_{{}_{\hspace*{0.0mm}G}} \hspace*{4mm}}
\title{On Languages Generated by Signed Grammars}
\author{\"{O}mer E\u{g}ecio\u{g}lu
\institute{Department of Computer Science\\
University of California\\
Santa Barbara, CA 93106, USA}
\email{omer@cs.ucsb.edu}
\and
Benedek Nagy
\institute{Department of Mathematics,
Eastern Mediterranean University\\
99628 Famagusta, North Cyprus, Mersin-10, Turkey\\
Department of Computer Science, Institute of Mathematics and Informatics,\\ Eszterh\'azy K\'aroly Catholic University, Eger, Hungary}
\email{nbenedek.inf@gmail.com}
}
\begin{document}
\maketitle

\begin{abstract}
We consider languages defined by signed grammars
which are similar to context-free grammars
except productions with signs associated to them are allowed. As a consequence, the words generated also
have signs.
We use the structure of the formal series of yields of all derivation trees over such a grammar as
a method of specifying a formal language and study properties of the resulting family of languages.
\end{abstract}

\section{Introduction}

We consider properties of signed grammars,  which are grammars
obtained from context-free grammars (CFGs) by allowing right hand sides of productions to have negative signs in front.
The concept of generation for such grammars is somewhat different from that of context-free grammars.
A signed grammar is said to generate a language $\LL$ if the formal sum of
the yields over all derivation trees over the grammar corresponds to the list of words in $\LL$.
For a signed grammar, the yields of derivation trees may have negative signs attached to them, but
the requirement is that when the arithmetic operations are carried out in the
formal sum, the only remaining words are those of $\LL$, each appearing with multiplicity one.

The structure of context-free languages (CFLs)
under a full commutation relation defined on the terminal alphabet is the central
principle behind Parikh's theorem \cite{Parikh61}.
In partial commutation, the order of letters of some pairs of the terminal alphabet is
immaterial, that is, if they appear consecutively, the word obtained by swapping their order
is equivalent to the original one. These equivalence classes are also called
traces and studied intensively in connection to parallel processes \cite{Mazurkiewicz77,JanickiKKM19,LATAtrace,DiekertRozenberg95}.
Our motivation for this work is languages obtained by picking
representatives of the equivalence classes in $ \Sigma^*$ under
a partial commutativity relation, called Cartier-Foata languages \cite{CartierFoata69}. In
the description of these languages with Kleene-closure type expansions, words appear with
negative signs attached to them. However
such words are cancelled by those with positive signs, leaving only the sum of the words of the language.
An example of this is $(a+b-ba)^*$ which is more familiarly denoted by the regular expression $a^*b^*$.
The interesting aspect of Cartier-Foata languages is that the words with negative signs cancel out
automatically, leaving only the
representative words, each appearing exactly once.

Motivated by these languages, we consider grammars which are obtained from context-free
grammars by
allowing signed productions, i.e., normal productions (in the role of positive productions) and productions
of the form $ A \der - \alpha$ (negative productions).
In this way, a derivation results in a signed word where
the sign depends on the parity of the number of negative rules applied in the derivation.
We consider those derivations equivalent that belong to the same derivation tree, and actually,
the derivation tree itself defines the sign of the derived word.
The language generated by such a grammar is obtained by
taking all possible derivation trees for a given word (both its positive and negative derivations)
and requiring that the
sum of the yields of all derivation trees over the grammar simply is a list of the
words in a language $\LL$. This means that the simplified formal sum
is of the form $ \sum_{w \in \LL} w $, each word of the language appearing with multiplicity one. (Without loss of generality, in this study, we restrict ourselves to grammars having finitely many parse trees for each of the derived words.)

On one hand, the requirements in the specification of a language generated by a
signed grammar may seem too restrictive.
But at the same time this class of languages includes all unambiguous context-free
languages and it is closed under complementation, and consequently can
generate languages that are not even context-free. Therefore it is of interest to consider the
interplay between the restrictions and various properties of languages generated by signed grammars.

\section{Preliminaries}
Given a language $\LL$ over an alphabet $\Sigma$, we identify $\LL$ with the formal
sum of its words denoted by $f(\LL)$:
\begin{equation}\label{fsum}
f(\LL)= \sum_{w \in \LL} w ~ .
\end{equation}
The sum in (\ref{fsum}) is also referred to as the {\em listing series} of $\LL$.
A {\em weighted series of $\LL$} is a formal series of the form
$ \sum_{w \in \LL} n_w \, w $
where $n_w$ are integers. Thus a weighted series of $ \Sigma^*$
$$ \sum_{w \in \Sigma^*} n_w \, w $$
is the listing series of some language $\LL$ over $\Sigma$ iff
\begin{equation}\label{defnw}
n_w =
\left\{
\begin{array}{ll}
1 & \mbox{ if } w \in \LL \\
0 & \mbox{ if } w \not\in \LL \,.
\end{array}
\right.
\end{equation}
We are allowed ordinary arithmetic operations on weighted series in a natural way.
The important thing is that a weighted series is the listing series of a language
$\LL$ iff the coefficients of the words in $\LL$ in the weighted series are 1, and all the others are 0.
So for example over $\Sigma=\{ a,b,c\}$, the weighted series
$ a+b+c+ba $
is the listing series of the finite language $ \LL=\{a,b,c,ba\}$, whereas
the weighted series $ a+b+c-ba $
does not correspond to a language over $\Sigma$. This is because in the latter example $n_w$
does not satisfy (\ref{defnw}) for $ w = ba$. As another example, the difference of the
weighted series $ 2 a + 3 b -c +ba$ and $a+2b-2c+ba$
corresponds to the language $\LL= \{a,b, c \}$. \\

\subsection{CFGs and degree of ambiguity}
Next we look at the usual CFGs $G = (V,\Sigma, P , S)$. Here the start symbol is $ S\in V$.
Let $T$ be a parse (derivation) tree over $G$ with root label $S$ and terminal letters as labels of the leaves of $T$.
Let
$ Y (T) \in \Sigma^* $ be the {\em yield} of $T$. Then
the language generated by $G$ is
$$
\LL(G)= \{ Y (T) ~|~ T \mbox{ is a parse tree over } G \} \,.
$$
This is equivalent to $ \LL(G)= \{ w \in \Sigma^* ~|~ S \derstarG w \} $.
For a CFG $G$, we can define the formal weighted sum
\begin{equation}
\label{weighted}
f(G) = \sum_{T \in {\cal T}_G } Y(T) =\sum_{w \in \Sigma^*} n_w w
\end{equation}
where ${\cal T}_G$ denotes all parse trees over $G$.
Various notions of ambiguity for CFLs can be interpreted as
the nature of the coefficients $n_w$ that appear in (\ref{weighted}).
Rewriting some of the definitions in Harrison
\cite[pp. 240-242]{Harrison78} in terms of these coefficients, we have
\begin{enumerate}
\item Given $k\geq1$, $G$ is {\em ambiguous of degree $k$} if $n_w \leq k$ for all $ w \in \LL(G)$.
\item ${\LL}$ is {\em inherently ambiguous of degree $k\geq2$} if $\LL$ cannot be
generated by any grammar that is ambiguous of degree less than
$k$ but can be generated by by a grammar that is ambiguous of degree $k$.
In other words the degree of ambiguity of a CFL is the
least upper bound for the number of derivation trees which a word in the language can have.
\item $\LL$ is {\em finitely inherently ambiguous} if there is some $k$ and some $G$ for $\LL$ so that
$G$ is inherently ambiguous of degree $k$.
\item A CFG $G$ is {\em infinitely ambiguous} if for each $i \geq 1$, there exists a word in $\LL(G)$
which has at least $i$ parse trees. A language $L$ is {\em infinitely inherently ambiguous}
if every grammar generating $L$ is infinitely ambiguous. \\
\end{enumerate}

The CFL ${\cal A} = \{ a^i b^j c^k ~ | ~  i=j \mbox{ or } j=k \}$
is inherently ambiguous of degree 2 \cite[p. 240]{Harrison78},
$ {\cal A}^m $ is inherently ambiguous of degree $2^m$ \cite[Theorem 7.3.1]{Harrison78},
and $ {\cal A}^* $ is  infinitely inherently ambiguous  \cite[Theorem 7.3.3]{Harrison78}.
Another interesting CFL which is infinitely inherently ambiguous is Crestin's language \cite{Crestin72}
of double palindromes over a binary alphabet
$\{ w_1 w_2 ~ | ~ w_1, w_2 \in \{a,b\}^*, w_1 = w_1^R, w_2 = w_2^R \}$.
Furthermore, for every $ k \geq 1$, there exist
inherently ambiguous CFLs of degree $k$.
The behavior of the sequence $n_w$ over all CFGs for a language was studied by Wich \cite{Wich00, Wich05}.

\section{Signed grammars}
We consider {\em signed grammars} $G$ which are like CFGs 
but with a sign associated with each production, that is, apart from the usual (say positive) productions,
we allow productions of the form $ A \der - \alpha$. 
In the derivation relation we use the signs as usual in a multiplicative manner:
We start the derivation from the sentence symbol (with $+$ sign, but as
usual we may not need to put it, as it is the default sign).
The derivation steps, as rewriting steps, occur as they are expected in a CFG, the only
extension is that we need to deal with also the sign.
When a positive  production is applied in a sentential form, its sign does not change,
while whenever a negative production is applied,
this derivation step switches the sign of the sentential form.
Thus, 
in this case the yield of a parse tree of
$G$ is a word over $ \Sigma$ with a $ \pm$ sign attached to it.
Furthermore, the sign of a derived word depends only on the parity of the
number of
negative productions used during its derivation. Therefore, different derivation trees for the same
word may lead to the word with different signs attached to it. 
We note that, in fact, any CFG is a signed grammar.
For a signed grammar $G$, let
$f(G)$ be defined as in (\ref{weighted}),
where again ${\cal T}_G$ denotes all parse trees over $G$. Without loss of generality, we may assume that in the grammar $G$ there are only finitely many parse trees for any of the words generated by the grammar.

\begin{definition}
We say that a signed grammar $G$ {\em generates} a language $\LL$ iff the weighted series $f(G)$
in (\ref{weighted}) is the listing series of $\LL$, i.e. $f(G) = f(\LL)$.
\end{definition}

\subsection{Examples of languages generated by signed grammars}
\begin{example}
\normalfont
For the signed grammar $G_1$ with start symbol $A$ and productions $ A \der -a A \, | \, \lambda $, we have
\begin{equation}\label{G1}
f(G_1) = \sum_{i\geq 0} a^{2i} -
 \sum_{i\geq 0} a^{2i+1} \,.
\end{equation}
Therefore the signed grammar $G$ with productions
$S \der A \,| \, B$, $A \der -a A \, | \, \lambda$, $ B \der aa B \,  | \, a $
generates the regular language $(aa)^*$.
As this is our first example, we provide details of the derivations in $G$: \\ 
\begin{itemize}
\item The empty word $\lambda$ can be derived only in one way, by applying a positive production, thus it is in the language.
\item By applying a negative and a positive production, $S \Rightarrow A \Rightarrow - aA \Rightarrow -a$ yields
$-a$, and $S \Rightarrow B \Rightarrow a$ yields $+a$. These two are the only derivations over $G$ for
$\pm a$. This means that the word $a$ is not in the language.
\item For the word $aa$, the only derivation is
$S \Rightarrow A \Rightarrow - aA \Rightarrow aaA \Rightarrow aa $.
Consequently $aa$ is in the generated language.
\item Finally, by induction, one can see that an even number of $a$-s can only be produced by starting the
derivation by $ S \Rightarrow A$. Following this positive production, each usage of $ A \der -aA$ introduces a
negative sign. Therefore each word of the form $ a^{2i}$ is generated once this way with a $+$ sign.
On the other hand
there are two possible ways to produce a string $ a^{2i+1}$ of an odd number of $a$-s.
One of these starts with $ A \Rightarrow -aA$ as before and produces $ - a ^{2i+1}$ after an odd number
of usages of $ A \der -aA$;
the other one starts with $ S \Rightarrow B$ and produces
$  a ^{2i+1}$ after an even number of applications of $ B \der aa B$, followed by $B \der a$.
Therefore odd length words cancel each other out and are not in the language generated.
\end{itemize}
Another way to look at this is to note that
for the (signed) grammar $G_2$ with the start symbol $B$ and productions
$ B \der aa B \,  | \, a $, we have
\begin{equation}\label{G2}
f(G_2) = \sum_{i\geq 0} a^{2i+1} \,,
\end{equation}
and the words generated by $G$ are given by the formal sum of (\ref{G1}) and (\ref{G2}).
\end{example}
\begin{example}
\normalfont
The signed grammar with productions $ S \der aS \,|\, bS \,|\, - ba S \,|\, \lambda$
generates the regular language denoted by the regular expression $ a^* b ^*$.
First few applications of the productions give
\begin{eqnarray*}
&& \lambda; \\
&& a + b -ba;\\
&& a^2 + ab -aba+ba+b^2 -b^2a -ba^2-bab+baba;
\end{eqnarray*}
in which the only immediate cancellation is of $-ba$, though all words carrying negative signs will eventually cancel out.
This is a special case of the Cartier-Foata result \cite{CartierFoata69}, \cite[Section 8.4]{EgeciogluGarsia21}.
\end{example}
\begin{example}
\normalfont
Over the decimal (or the binary) alphabet 
 we can construct an
unambiguous regular grammar $G$
that generates all nonnegative even numbers, e.g., 
$S \der 9S \,|\, 8A \,|\, 7S \,|\, 6A \,|\, 5S \,|\, 4A  \,|\, 3S \,|\, 2A \,|\, 1S \,|\, 0A $ and $ A \der 9S \,|\, 8A \,|\, 7S \,|\, 6A \,|\, 5S \,|\, 4A  \,|\, 3S \,|\, 2A \,|\, 1S \,|\, 0A \,|\, \lambda  $.
Let, further, a regular grammar $G^{\prime}$ be generating the numbers which are
divisible by 6 (e.g., based on the deterministic finite automaton checking the sum of the digits to be divisible by 3 and the last digit must be even, we
need states/nonterminals to count the sum of already read digits by mod 3 and take care to the last digit as we
did for $G$).

Then $\LL(G)$ consists of all even numbers and $\LL(G^{\prime})$ consists of all numbers divisible by $6$.
Now, from $G^{\prime}$, we may make a signed grammar $G^{\prime\prime}$ which allows us to derive
every multiple of 6 with the sign $-$.
Then by combining the two grammars $G$ and $G^{\prime\prime}$, we can easily give a signed
grammar that generates all even numbers that are not divisible
by 3 (i.e., even numbers not divisible by 6).
\end{example}
\begin{example}
\normalfont
Over the  alphabet $\{a,b\}$ consider the signed grammar with productions
$ S \der aSa \, | \, bSb \, | \, a \, | \, b$. This so far generates odd length palindromes. Let us add
the productions $ S \der -A$, $~A \der  -abAba \, | \, a $.

Then each odd length palindrome with the letter $b$ in the middle
has exactly one derivation tree with a $+$ sign. There are no cancellations for these and
therefore all odd length palindromes with $b$ in the middle are in the language.
If the middle of an odd length palindrome $w$ is $a$ but not $ababa$,
then $w$ is not in $\LL$ as it has also derivation tree with $-$ sign.
Similarly, if the middle of $w$ is $ababa$ but not $ababababa$,
$w$ is in $\LL$. In general, if an odd length palindrome $w$ has
$(ab)^{2k-1}a(ba)^{2k-1}$ in the middle, but it does not have
$(ab)^{2k}a(ba)^{2k}$ in its middle, then it is in $\LL$.
Here the number of derivation trees
for a word with a $+$ sign is either equal to the number of derivation
trees with a $-$ sign for the word, or it is exactly one more.
\end{example}
\begin{example}
\normalfont
For the following signed grammar
\begin{eqnarray*}
&& S_1 \der - aA \, | \, Ba \, | \, a \\
&& A \der - aA \, | \, Ba\,  |\,  a \\
&& B \der - aB \,|\, Ba \,| \,-a\, |\, a a
\end{eqnarray*}
for $n$ odd, there are $2^{n-1}$ parse trees for $a^n$ and $2^{n-1} - 1$ parse trees for $ - a^n$.
For $n$ even,
there are $2^{n-1} - 1$ parse trees for  $a^n$ and $2^{n-1}$  parse trees for $- a^n$. In other words for the above grammar
\begin{eqnarray*}
f(G) & = &\sum_{i \geq 0} 2^{2i} a^{2i+1} +  \sum_{i \geq 0} ( 2^{2i}-1) a^{2i}
- \sum_{i \geq 0} (2^{2i}-1) a^{2i+1} -  \sum_{i \geq 0} 2^{2i} a^{2i}\\
&=&
\sum_{i \geq 0} (-1)^i a^{i+1} \,.
\end{eqnarray*}

If we add the productions
$ S \der S_1 \, | \, S_2, ~ S_2 \der aaS_2 \, | \, aa$ then the resulting signed grammar generates the regular language
$a (aa)^*$.
Even though the language generated is very simple we see that signed grammars possess some interesting
behavior.
\end{example}

\section{Properties of languages generated by signed grammars}

In this section our aim is twofold. On the one hand we give some closure properties
of the class of languages generated by our new approach and, on the other hand, we give
hierarchy like results by establishing where this family of languages is compared to
various other classes.

We immediately observe that
in the weighted sum (\ref{weighted}) for a CFG $G$ (i.e. a signed grammar $G$ with no
signed productions), the coefficient $n_w$
is the number of parse trees for $w$ over $G$, in other words the degree of ambiguity of $w$.

\begin{proposition}
Any unambiguous CFL is generated by a signed grammar.
\end{proposition}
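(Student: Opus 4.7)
The plan is to observe that the proposition is essentially immediate from the definitions, using the fact (noted in the excerpt just before Section 3's definition) that any CFG is a signed grammar, namely one whose productions all carry the default positive sign.

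First I would start from the hypothesis: if $\LL$ is unambiguous, then by definition there exists a CFG $G = (V, \Sigma, P, S)$ with $\LL(G) = \LL$ such that every $w \in \LL$ has exactly one parse tree over $G$. I would view this $G$ as a signed grammar in which no production carries a negative sign.

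Next I would compute $f(G)$ as defined in equation (\ref{weighted}). Since every parse tree of $G$ uses only positive productions, every yield contributes with a $+$ sign. Hence the coefficient $n_w$ in $f(G) = \sum_{w \in \Sigma^*} n_w w$ equals the number of parse trees of $w$ over $G$. By unambiguity, $n_w = 1$ if $w \in \LL$ and $n_w = 0$ otherwise, which is precisely the condition (\ref{defnw}) characterizing a listing series. Therefore $f(G) = f(\LL)$, and by the definition of generation for signed grammars, $G$ generates $\LL$.

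There is essentially no obstacle here; the argument is bookkeeping. The only subtlety worth flagging is the mild technical condition mentioned in Section~3 that signed grammars be restricted to those with finitely many parse trees per derived word. Since $G$ is unambiguous, each word admits exactly one parse tree, so this finiteness condition is satisfied trivially. Thus the unambiguous CFG, reinterpreted as a signed grammar, already witnesses the claim.
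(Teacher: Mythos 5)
Your proposal is correct and follows exactly the paper's own argument: take any unambiguous CFG for $\LL$, view it as a signed grammar with all productions positive, and observe that unambiguity forces $f(G)=f(\LL)$. The paper states this in one line; your write-up simply spells out the bookkeeping, including the (correctly noted) triviality of the finiteness condition.
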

\begin{proof}
An unambiguous CFL $\LL$ is generated by the signed grammar $G$ where
$G$ is any unambiguous CFG for $\LL$.
\end{proof}

As the class of unambiguous CFLs contains all deterministic CFLs, $LR(0)$ languages,
regular languages, subsets of $w_1^* w_2^*$ \cite[Theorem 7.1]{GinsburgUllian66},
all of these languages are generated by signed grammars.
Further, all these classes are proper subsets of the class of languages generated by signed grammars.

Now we present a closure property.

\begin{proposition}
Languages generated by signed grammars are closed under complementation.
\end{proposition}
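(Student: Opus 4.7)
The plan is to exhibit, for any signed grammar $G$ generating $\LL$, a signed grammar $G''$ whose weighted series equals $f(\Sigma^*) - f(\LL)$, and to observe that this is exactly the listing series of $\Sigma^* \setminus \LL$. Since every word $w$ has $n_w = 1$ in $f(\Sigma^*)$ and $n_w \in \{0,1\}$ in $f(\LL)$, the difference has coefficient $1$ on $w$ when $w \notin \LL$ and $0$ otherwise, so it satisfies (\ref{defnw}) for the complement language.

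The construction proceeds in three steps. First, I take an unambiguous CFG (in fact, a trivial regular grammar) $G_{\Sigma^*}$ with start symbol $S_1$ that generates $\Sigma^*$; by Proposition~1, viewed as a signed grammar it satisfies $f(G_{\Sigma^*}) = f(\Sigma^*)$. Second, I rename the nonterminals of $G$ so that its nonterminal set is disjoint from that of $G_{\Sigma^*}$, and call its start symbol $S_2$. Third, I introduce a fresh start symbol $S''$ and add exactly the two productions
\begin{equation*}
S'' \der S_1 \mid -\, S_2 \, .
\end{equation*}
Call the resulting signed grammar $G''$.

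The parse trees of $G''$ rooted at $S''$ are in bijection with the disjoint union of the parse trees of $G_{\Sigma^*}$ (contributing the sign of the first production, namely $+$) and the parse trees of $G$ (contributing an additional $-$ from the second production, multiplied into the sign already carried by that subtree). Because the nonterminal alphabets are disjoint, no parse tree of $G''$ mixes rules from the two grammars below the root. Hence
\begin{equation*}
f(G'') \;=\; f(G_{\Sigma^*}) \;-\; f(G) \;=\; f(\Sigma^*) - f(\LL) \;=\; f(\Sigma^* \setminus \LL) \, .
\end{equation*}
Finitely-many-parse-trees-per-word is inherited: $G_{\Sigma^*}$ is unambiguous and $G$ already has this property by the standing assumption, so $G''$ has at most the sum of the two counts for each word.

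There is no real obstacle; the only point requiring care is the disjointness of nonterminals so that the parse trees of $G''$ genuinely split into the two desired classes, and the recognition that the single negative production on $S_2$ globally flips the sign of every derivation tree that passes through $S_2$, which is precisely what is needed to subtract $f(G)$ from $f(G_{\Sigma^*})$.
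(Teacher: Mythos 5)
Your construction is exactly the one used in the paper: combine an unambiguous CFG for $\Sigma^*$ (start symbol $S_1$) with the given signed grammar for $\LL$ (start symbol $S_2$, nonterminals made disjoint) under a new start symbol with productions $S \der S_1 \mid -S_2$, so that the weighted series becomes $f(\Sigma^*) - f(\LL)$. Your additional remarks on the parse-tree bijection and the preservation of the finitely-many-trees property are correct elaborations of the same argument.
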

\begin{proof}
Take an unambiguous CFG for $\Sigma^*$ with start symbol $S_1$.
If $\LL$ is generated by a signed grammar with start symbol $S_2$ (and no common nonterminal in the two grammars),
then the productions of the two grammars together with $ S \der S_1 ~|~ - S_2$
with a new start symbol $S$ generates $\overline{\LL}$.
\end{proof}

We continue the section comparing our new class of languages with other well-known language class, the class of
CFLs.

In 1966 Hibbard and Ullian constructed an unambiguous CFL
whose complement is not a CFL \cite[Theorem 2]{HibbardUllian66}.
Recently Martynova and Okhotin constructed
an unambiguous linear language whose complement is not context-free \cite{MartynovaOkhotin23}.
This shows that unambiguous linear CFLs are not closed under complementation
while providing another proof of Hibbard and Ullian's result.

We know that languages generated by signed grammars are closed under complementation,
and also every unambiguous CFL is generated by a signed grammar. A consequence of
this is that signed grammars can generate languages that are not context-free.

\begin{proposition}\label{NotCFL}
There is a language generated by a signed grammar that is not context-free.
\end{proposition}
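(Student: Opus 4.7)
The plan is to combine the two preceding propositions with a classical separation result. Specifically, I would first invoke the Hibbard--Ullian theorem (or alternatively the more recent Martynova--Okhotin result, both cited in the paragraph just above the proposition) to fix an unambiguous context-free language $\LL$ whose complement $\overline{\LL}$ is not context-free. This is the only external ingredient needed, and it is already quoted in the excerpt.

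Next I would apply Proposition 1 to $\LL$: since $\LL$ is an unambiguous CFL, there is a signed grammar (in fact, an ordinary unambiguous CFG, viewed as a signed grammar with no negative productions) that generates $\LL$. Then I would apply Proposition 2: the class of languages generated by signed grammars is closed under complementation, so $\overline{\LL}$ is also generated by some signed grammar. But $\overline{\LL}$ is, by the choice of $\LL$, not context-free. Hence $\overline{\LL}$ is a witness language generated by a signed grammar that is not context-free, proving the proposition.

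There is essentially no obstacle to carrying out this plan, since all the heavy lifting has already been done: Proposition 1 supplies one direction of expressive power (all unambiguous CFLs), Proposition 2 supplies a closure property that CFLs lack, and the Hibbard--Ullian separation guarantees that these two features together already escape the class of CFLs. If one prefers a concrete example, one could instantiate $\LL$ with the specific unambiguous linear language of Martynova and Okhotin and write down both the CFG for $\LL$ and the CFG for $\Sigma^*$ explicitly, then combine them using the construction in the proof of Proposition 2 (a new start symbol $S$ with $S \der S_1 \,|\, -S_2$); but for the mere existence claim stated, the three-line chain above suffices.
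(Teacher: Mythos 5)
Your proposal is correct and follows exactly the paper's own argument: take the Hibbard--Ullian unambiguous CFL $\LL$ with non-context-free complement, apply Proposition 1 to generate $\LL$ by a signed grammar, and apply Proposition 2 (closure under complementation) to conclude that $\overline{\LL}$ is also generated by a signed grammar while failing to be context-free. Nothing further is needed.
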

\begin{proof}
If $\LL$ is the unambiguous CFL constructed by Hibbard and Ullian, then $\LL$ and
therefore $\overline{\LL}$ are generated by signed grammars. But we know that $\overline{\LL}$ is not context-free.
\end{proof}

Actually, our last proposition shows that the generative power of signed grammars is surprisingly large, it contains, e.g.,
all deterministic and unambiguous 
CFLs and their complements. Thus, one can easily generate some languages that are not in the class of CFLs.

Continuing with closure properties,
recall that disjoint union is an operation that is defined only on disjoint sets
which produces their union.


\begin{proposition}
Languages generated by signed grammars are closed under disjoint union $\uplus$.
\end{proposition}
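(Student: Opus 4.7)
The plan is to build a signed grammar for $\mathcal{L}_1 \uplus \mathcal{L}_2$ directly from the two given grammars by adding a single new start symbol. First I would take signed grammars $G_1$ and $G_2$ with start symbols $S_1$ and $S_2$ generating $\mathcal{L}_1$ and $\mathcal{L}_2$, respectively, renaming nonterminals if necessary so that the nonterminal sets of $G_1$ and $G_2$ are disjoint. I would then form the new signed grammar $G$ whose productions consist of all productions of $G_1$, all productions of $G_2$, and the two positive productions $S \der S_1 \mid S_2$, where $S$ is a fresh start symbol.

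Next I would analyze $f(G)$. Because the nonterminal sets of $G_1$ and $G_2$ are disjoint, every parse tree over $G$ rooted at $S$ is obtained in exactly one of two ways: apply $S \der S_1$ and attach a unique parse tree of $G_1$ at $S_1$, or apply $S \der S_2$ and attach a unique parse tree of $G_2$ at $S_2$. The root production is positive, so the sign of the yield is inherited from the subtree. Summing over all parse trees and using (\ref{weighted}) gives
\begin{equation*}
f(G) \;=\; f(G_1) + f(G_2) \;=\; f(\mathcal{L}_1) + f(\mathcal{L}_2)\,,
\end{equation*}
where the second equality uses the assumption that $G_1$ and $G_2$ generate $\mathcal{L}_1$ and $\mathcal{L}_2$ in the sense of the definition.

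The step where the hypothesis of disjointness is essential comes at the end: I must verify that $f(\mathcal{L}_1) + f(\mathcal{L}_2)$ is the listing series of $\mathcal{L}_1 \uplus \mathcal{L}_2$, i.e. that every coefficient is $0$ or $1$. Since $\mathcal{L}_1 \cap \mathcal{L}_2 = \emptyset$, for each $w \in \Sigma^*$ at most one of $f(\mathcal{L}_1)$ and $f(\mathcal{L}_2)$ contributes a $1$ to the coefficient of $w$, so the sum satisfies (\ref{defnw}) for the language $\mathcal{L}_1 \cup \mathcal{L}_2$. Hence $f(G) = f(\mathcal{L}_1 \uplus \mathcal{L}_2)$, and $G$ generates $\mathcal{L}_1 \uplus \mathcal{L}_2$.

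The only potential obstacle is precisely the coefficient check in the last step: without the disjointness assumption a word common to both languages would receive coefficient $2$ and the result would fail. The finiteness-of-parse-trees assumption carries over to $G$ automatically, since each parse tree of $G$ rooted at $S$ corresponds to a parse tree in exactly one of $G_1$ or $G_2$ together with a choice of initial production, so no new infinite multiplicities are introduced.
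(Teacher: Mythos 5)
Your construction is exactly the one in the paper: combine the two grammars (with disjoint nonterminal sets) under a new start symbol via the positive productions $S \der S_1 \mid S_2$, so that $f(G) = f(\LL_1) + f(\LL_2)$, which is a listing series precisely because $\LL_1 \cap \LL_2 = \emptyset$. Your write-up just makes the coefficient check and the finiteness-of-parse-trees point explicit, which the paper leaves implicit.
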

\begin{proof}
Let $\LL_1$ and $\LL_2$ be two languages over an alphabet $\Sigma$ such that $\LL_1 \cap \LL_2 = \emptyset$.
Let $\LL_1$ be generated by a signed grammar with start symbol $S_1$ and
$\LL_2$ be generated by a signed grammar with start symbol $S_2$, such that the sets of nonterminals of these two grammars are disjoint.
Then the productions of the two grammars together with $ S \der S_1 ~|~ S_2$
with a new start symbol $S$ generates the disjoint union $\LL_1 \uplus \LL_2$.
\end{proof}

Now, let us define the set theoretical operation ``subset minus'' ($\ominus$), as follows:
let $A\subseteq B$, then $B \ominus A = B \setminus A$.
This type of setminus operation is defined only for sets where the subset condition holds.

\begin{proposition}
Languages generated by signed grammars are closed under subset minus $\ominus$.
\end{proposition}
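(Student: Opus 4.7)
The plan is to mimic the construction used for complementation, but with the unambiguous grammar for $\Sigma^*$ replaced by a signed grammar for the larger set $\LL_1$. Specifically, suppose $\LL_2 \subseteq \LL_1$ with $\LL_1$ generated by a signed grammar $G_1$ having start symbol $S_1$ and $\LL_2$ generated by a signed grammar $G_2$ having start symbol $S_2$, where we may assume (by renaming) the nonterminal sets of $G_1$ and $G_2$ are disjoint. I would then form a signed grammar $G$ whose productions are those of $G_1$ and $G_2$ together with $S \der S_1 \mid -S_2$, where $S$ is a new start symbol.

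Next I would verify the formal series identity. Every parse tree of $G$ rooted at $S$ either begins with the positive production $S \der S_1$ followed by a parse tree of $G_1$, contributing $+Y(T_1)$, or it begins with the negative production $S \der -S_2$ followed by a parse tree of $G_2$, contributing $-Y(T_2)$. Summing over all parse trees and using the hypothesis $f(G_i) = f(\LL_i)$, one obtains
\begin{equation*}
f(G) \;=\; f(G_1) - f(G_2) \;=\; \sum_{w \in \LL_1} w \;-\; \sum_{w \in \LL_2} w.
\end{equation*}

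The subset hypothesis $\LL_2 \subseteq \LL_1$ is exactly what is needed so that this difference is a listing series: every word of $\LL_2$ has coefficient $1$ in both sums and therefore cancels to $0$, while words in $\LL_1 \setminus \LL_2$ retain coefficient $1$ and words outside $\LL_1$ keep coefficient $0$. Hence $f(G) = \sum_{w \in \LL_1 \setminus \LL_2} w = f(\LL_1 \ominus \LL_2)$, showing that $G$ generates $\LL_1 \ominus \LL_2$ in the sense of the definition.

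There is no real obstacle here; the only subtlety is the bookkeeping that the subset condition guarantees the resulting coefficients lie in $\{0,1\}$ rather than becoming $-1$ somewhere. Without $\LL_2 \subseteq \LL_1$ the construction would still produce a well-defined weighted series, but it would fail condition (\ref{defnw}) and hence not be a listing series, which is precisely why the operation $\ominus$ is restricted to comparable pairs. One should also note that the finiteness assumption on the number of parse trees per word carries over automatically, since each parse tree of $G$ is obtained by prepending a single root production to a parse tree of $G_1$ or $G_2$.
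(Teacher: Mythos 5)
Your proof is correct and is essentially the paper's own argument: combine the two signed grammars via a new start symbol with $S \der S_1 \mid -S_2$ so that $f(G) = f(\LL_1) - f(\LL_2)$, and use the subset hypothesis to see the coefficients stay in $\{0,1\}$. The only difference is labeling (the paper takes $\LL_1 \subseteq \LL_2$ and negates the subset's start symbol), and your version spells out the coefficient bookkeeping that the paper leaves implicit.
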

\begin{proof}
Let $\LL_1 \subseteq \LL_2$ be two languages over a given alphabet $\Sigma$.
Take the signed grammar for $\LL_1$ with start symbol $S_1$.
If $\LL_2$ is generated by a signed grammar with start symbol $S_2$ (with no common nonterminals of the two grammars),
then the productions of the two grammars together with $ S \der S_1 ~|~ - S_2$
with a new start symbol $S$ generates the language of $\LL_2 \ominus \LL_1$.
\end{proof}

Let $\LL_1,\LL_2 \subseteq \Sigma^*$ be two languages and $\$ \not\in \Sigma $.
The $\$$-concatenation of $\LL_1$ and $\LL_2$ is the language $\LL_1 \$ \LL_2 $
over the alphabet $ \Sigma \cup \{\$\}$.

\begin{proposition}
Languages generated by signed grammars are closed under $\$$-concatenation.
\end{proposition}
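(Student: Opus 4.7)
The plan is to mimic the constructions already used for disjoint union and subset minus, but with a single new production that inserts the separator $\$$ between the two derivations. Concretely, let $G_1$ and $G_2$ be signed grammars generating $\LL_1$ and $\LL_2$ respectively, with disjoint nonterminal sets and start symbols $S_1,S_2$. Introduce a fresh start symbol $S$ and add the single positive production $S \der S_1 \$ S_2$ to the union of the two grammars; call the resulting signed grammar $G$.

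The key step is to set up a bijection between parse trees of $G$ and pairs of parse trees of $G_1,G_2$. Because $S$ is new and has only the one production, every parse tree $T$ over $G$ begins with an application of $S \der S_1 \$ S_2$ at the root, and then decomposes into a parse tree $T_1$ below $S_1$ and a parse tree $T_2$ below $S_2$. Since the nonterminals are disjoint, no production from $G_1$ can be applied inside the $S_2$-subtree or vice versa, so $T_1$ ranges over all parse trees of $G_1$ and $T_2$ ranges independently over all parse trees of $G_2$. The yield of $T$ is $Y(T_1)\$Y(T_2)$, and since the new production is positive, the sign of $T$ is the product of the signs of $T_1$ and $T_2$.

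From this it follows that the coefficient $n_w$ of a word $w$ in $f(G)$ vanishes unless $w$ contains exactly one occurrence of $\$$ (because $\$\notin\Sigma$, so $\$$ can only enter a yield via the new production, and it does so exactly once per tree). For $w=u\$v$ with $u\in\Sigma^*$, $v\in\Sigma^*$, the factorization is unique and the coefficient factors as
\begin{equation*}
n_w \;=\; \Bigl(\sum_{T_1\colon Y(T_1)=u} \sign(T_1)\Bigr)\Bigl(\sum_{T_2\colon Y(T_2)=v} \sign(T_2)\Bigr) \;=\; n^{(1)}_u \cdot n^{(2)}_v,
\end{equation*}
where $n^{(i)}$ are the coefficients from $f(G_i)$. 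Since $G_i$ generates $\LL_i$, each $n^{(i)}$ is the indicator of $\LL_i$, so $n_w=1$ exactly when $u\in\LL_1$ and $v\in\LL_2$, and $n_w=0$ otherwise. Thus $f(G)=f(\LL_1\$\LL_2)$, as required.

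There is no serious obstacle here; the only point to be careful about is the uniqueness of the factorization $w=u\$v$, which is precisely what the hypothesis $\$\notin\Sigma$ buys us, and the disjointness of nonterminals, which guarantees that the two sub-derivations are truly independent so that the sign factors multiplicatively.
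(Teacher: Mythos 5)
Your proof is correct and follows essentially the same route as the paper's: the same construction (a new start symbol with a single positive production splicing in the marker $\$$, the paper merely routes $\$$ through an auxiliary nonterminal), and the same key point that $u\$v = u'\$v'$ forces $u=u'$ and $v=v'$ since $\$\notin\Sigma$, so the coefficient of $u\$v$ factors as the product of the coefficients from the two grammars. Your write-up is if anything slightly more explicit about the tree decomposition and the multiplicativity of signs.
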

\begin{proof}
The language $\LL_1 \$$
has the prefix property (i.e. it is prefix-free) due to the special role of the marker $\$$.
Let $G_1$ and $G_3$ be signed grammars with disjoint variables and start symbols $S_1$ and $S_3$
that generate $\LL_1$ and $\LL_2$, respectively. Consider also the signed grammar $G_2$ with the single production
$S_2 \der \$$. Then the signed grammar which have all the productions of $G_1, G_2, G_3$  together with
the production $ S \der S_1 S_2 S_3 $ where $ S$ is a new start symbol generates
the language $\LL_1 \$ \LL_2 $.  The proof follows by observing that for $u, u' \in \LL_1$ and $v, v' \in \LL_2$,
$u \$ v = u' \$ v'$ iff $u=u'$ and $v=v'$, so that each word that appears in the expansion of
$$
\left( \sum_{w \in \LL_1} w \right) \$ \left( \sum_{w \in \LL_2} w \right)
$$
has coefficient 1.
\end{proof}

In a similar manner, it can also be seen that we have a similar statement
for languages over disjoint alphabet, i.e., the class of languages generated by signed grammars is
closed under ``disjoint concatenation'' $\boxdot$.

\begin{proposition}
Let $\LL_1 \subseteq \Sigma_1^*$ and $\LL_2 \subseteq \Sigma_2^*$ be two languages that are generated by signed grammars, where $\Sigma_1 \cap \Sigma_2 = \emptyset$. Then, the
language $\LL_1 \boxdot \LL_2 = \LL_1 \LL_2$ can be generated by a signed grammar.
\end{proposition}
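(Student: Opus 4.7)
The plan is to mimic the construction used for $\$$-concatenation, with the disjointness of the alphabets $\Sigma_1,\Sigma_2$ playing the role that the marker $\$$ played there. So first I take signed grammars $G_1$ and $G_2$ for $\LL_1$ and $\LL_2$ with start symbols $S_1, S_2$, renaming nonterminals if necessary so that the nonterminal sets are disjoint. Then I introduce a fresh start symbol $S$ and the single new production $S \der S_1 S_2$, forming a signed grammar $G$ whose productions are those of $G_1$, $G_2$, and this new rule.

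Next I would observe that because the nonterminal sets of $G_1$ and $G_2$ are disjoint, any parse tree $T$ over $G$ rooted at $S$ must consist of the root application $S \der S_1 S_2$ followed by a parse tree $T_1$ of $G_1$ rooted at $S_1$ and a parse tree $T_2$ of $G_2$ rooted at $S_2$. Hence parse trees of $G$ are in bijection with ordered pairs $(T_1, T_2) \in \mathcal{T}_{G_1} \times \mathcal{T}_{G_2}$, and the yield of $T$ is the concatenation $Y(T_1) Y(T_2)$ with sign equal to the product of the signs of $T_1$ and $T_2$ (since the new production is positive). Consequently
\begin{equation*}
f(G) \;=\; \sum_{(T_1,T_2)} Y(T_1) Y(T_2) \;=\; f(G_1) \cdot f(G_2) \;=\; \Bigl( \sum_{u \in \LL_1} u \Bigr) \Bigl( \sum_{v \in \LL_2} v \Bigr).
\end{equation*}

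The crucial step is then to check that the right-hand product is the listing series of $\LL_1 \LL_2$, i.e.\ that every word $w \in \LL_1 \LL_2$ appears with coefficient exactly $1$. Here the disjointness hypothesis $\Sigma_1 \cap \Sigma_2 = \emptyset$ is essential: if $uv = u'v'$ with $u,u' \in \Sigma_1^*$ and $v,v' \in \Sigma_2^*$, then every letter in the common word belongs to exactly one of the two alphabets, so the prefix consisting of $\Sigma_1$-letters is uniquely determined. This forces $u=u'$ and $v=v'$, exactly as the $\$$ marker did in the previous proposition. Therefore the coefficient of $w$ in the product equals the number of factorizations $w = uv$ with $u \in \LL_1$ and $v \in \LL_2$, which is $1$ whenever $w \in \LL_1\LL_2$ and $0$ otherwise.

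There is no serious obstacle here, since the assumption that each $f(G_i)$ is already in reduced form (each word of $\LL_i$ appearing with multiplicity one) is handed to us by hypothesis, and the grammar finiteness condition (finitely many parse trees per generated word) is preserved: a word $w = uv$ in $\LL_1 \LL_2$ has the unique factorization described above, so its parse trees in $G$ correspond to pairs of parse trees for $u$ in $G_1$ and $v$ in $G_2$, of which there are only finitely many. Thus $G$ is a signed grammar generating $\LL_1 \boxdot \LL_2$ in the sense of the definition.
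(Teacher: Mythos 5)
Your proof is correct and follows exactly the route the paper intends: the paper states this proposition without a written proof, remarking only that it follows ``in a similar manner'' to the $\$$-concatenation case, and your argument is precisely that adaptation, with the disjointness of $\Sigma_1$ and $\Sigma_2$ supplying the unique factorization $w=uv$ that the marker $\$$ provided before. The construction $S \der S_1 S_2$, the bijection between parse trees of $G$ and pairs $(T_1,T_2)$, and the identity $f(G)=f(G_1)f(G_2)$ are all exactly what is needed.
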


In the following proposition,
$f(\LL)$ and  $f(G)$ are as defined in (\ref{fsum}) and (\ref{weighted}).
\begin{proposition}\label{diff}
Suppose  $\LL$ generated by a signed grammar. Then there are CFGs $G_1$ and $G_2$ such that
$f(\LL) = f(G_1) - f(G_2)$.
\end{proposition}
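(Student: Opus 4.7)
The plan is to construct $G_1$ and $G_2$ by a parity-tracking product construction that separates the parse trees of $G$ according to whether they contribute a $+$ yield or a $-$ yield, and then to read off $f(\LL)$ as the difference $f(G_1) - f(G_2)$.

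First I would fix a signed grammar $G = (V, \Sigma, P, S)$ generating $\LL$ and, for each nonterminal $A \in V$, introduce two fresh symbols $A^{+}$ and $A^{-}$. The new nonterminal set is $V' = \{A^{+}, A^{-} : A \in V\}$, and the start symbols will be $S^{+}$ for $G_1$ and $S^{-}$ for $G_2$. The intended semantics is that $A^{\tau}$ should generate exactly the yields of those parse trees of $G$ rooted at $A$ whose overall sign is $\tau$.

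Next I would translate the productions of $G$ into ordinary (unsigned) CFG productions over $V' \cup \Sigma$. For each production $A \der^{\epsilon} X_1 X_2 \cdots X_k$ of $G$ (where $\epsilon \in \{+,-\}$ is the sign attached to the rule and each $X_i$ is either a terminal or a nonterminal), and for each assignment of signs $\sigma_i \in \{+,-\}$ to those $X_i$ that are nonterminals, include the production
\[
A^{\tau} \der Y_1 Y_2 \cdots Y_k, \qquad \tau = \epsilon \cdot \prod_{i : X_i \in V} \sigma_i,
\]
where $Y_i = X_i^{\sigma_i}$ if $X_i \in V$ and $Y_i = X_i$ if $X_i \in \Sigma$ (the empty right-hand side is handled as $k=0$, giving $\tau = \epsilon$). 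Let $G_1$ and $G_2$ be the two CFGs obtained by taking these productions together with the start symbols $S^{+}$ and $S^{-}$ respectively.

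The key step is the verification, by induction on tree height, that there is a yield-preserving bijection between parse trees of $G$ rooted at $A$ whose sign is $\tau$ and parse trees of the CFG $(V', \Sigma, P', A^{\tau})$. Given a signed parse tree $T$ of $G$, the sign of each subtree is determined by the parities of negative rules used in it; decorating each internal node labelled $B$ with the sign of the subtree rooted at that node yields a unique parse tree over $P'$ rooted at $A^{\tau}$, where $\tau$ is the sign of $T$. Conversely, erasing the decorations from a parse tree over $P'$ produces a signed parse tree of $G$, and the construction of $\tau$ in each rule guarantees consistency. Once this bijection is established, summing yields over all parse trees rooted at $S^{+}$ (respectively $S^{-}$) gives $f(G_1)$ (respectively $f(G_2)$), so
\[
f(G_1) - f(G_2) = \sum_{T \in {\cal T}_G} \mathrm{sign}(T) \cdot Y(T) = f(G) = f(\LL),
\]
as required. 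The only delicate point is the bookkeeping in the bijection, in particular making sure that terminal-only and $\lambda$-productions are handled correctly and that no nonterminal copy ever appears that cannot be realized as the actual sign of some subtree; both follow directly from closing under all sign assignments in the construction of $P'$.
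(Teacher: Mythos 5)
Your construction is correct, but it is genuinely different from the one in the paper. The paper introduces a fresh terminal $t$, replaces each negative production $A \der -\alpha$ by $A \der t\alpha$, splits the resulting context-free language according to the parity of the number of occurrences of $t$ by intersecting with a regular language, and finally erases $t$ by a homomorphism to obtain $G_1$ and $G_2$. You instead build $G_1$ and $G_2$ by a direct product construction, doubling each nonterminal into $A^{+}$ and $A^{-}$ and closing the productions under all consistent sign assignments, with the sign of the left-hand side forced to be the product of the rule's sign and the children's signs. The two routes encode the same parity information (number of negative rules used in a subtree) in different places: the paper stores it in the terminal string via the marker $t$, you store it in the nonterminal labels. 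Your version has the advantage of making explicit the point on which the whole statement hinges, namely that the passage to $G_1$ and $G_2$ must preserve the \emph{number} of parse trees per word, not merely the generated languages; your yield-preserving tree bijection does exactly that, whereas the paper's argument implicitly relies on the facts that the standard triple construction for intersection with a DFA-recognized regular language and the grammar-level erasure of $t$ are both parse-tree--multiplicity preserving. The price you pay is an exponential blow-up in the number of productions (one copy per sign assignment to the nonterminals on a right-hand side), which is harmless here. Your observation that some symbols $A^{\tau}$ may be useless is correctly dismissed: uselessness does not affect the bijection or the weighted series.
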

\begin{proof}
Given a signed grammar over $\Sigma$, add an extra letter $t$ to
$\Sigma$ and replace all productions of the form $ A \der - \alpha $ by  $ A \der t \alpha$.
The words generated by this CFG over $ \Sigma \cup \{t\}$ with an even number of occurrences of
$t$ is a CFL since it is the intersection of CFL and the regular language, i.e. all words over
$ \Sigma \cup \{t\}$ with an even number of occurrences of $t$.
Similarly, the words generated with an odd number of occurrences of $t$ is a CFL.
We can then take homomorphic images of these two languages generated by replacing $ t$ by $ \lambda$
and obtain two CFLs generated by CFGs $G_1$ and $G_2$.
The weighted series $f(G)$ is then the difference of two weighted series
\begin{equation}\label{nw}
f(G)= f(G_1) - f(G_2) = \sum_{w \in \Sigma^*} n_w w ~- ~ \sum_{w \in \Sigma^*} n'_w w  ~.
\end{equation}
In (\ref{nw}), the coefficients $n_w$ and $n'_w$ are nonnegative integers
for all $w\in \Sigma^*$ as they count the number of derivation trees for $w $ over $G_1$ and $G_2$, respectively.
\end{proof}

\begin{remark}
In Proposition \ref{diff}, $f(G_1) - f(G_2)$ is the listing series of $\LL$,
and therefore $ n_w - n'_w =1 $ or $ n_w - n'_w =0 $ for all $ w \in \Sigma^*$.
In the first case $ w\in \LL$, and in the second $ w \not\in \LL$.
Note that these conditions do not imply that $ \LL = \LL(G_1) \setminus \LL(G_2)$.
\end{remark}

\section{Partial commutativity}
Addition of commutativity relations to CFGs was considered in \cite{Benedek09}.
Here we consider partial commutativity defined on $\Sigma^*$ where
$\Sigma =\{x_1, x_2, \ldots, x_m\}$.
Given an $m \times m $ symmetric $\{0 , 1\}$-matrix $ A = [ a_{i,j}] $ with 1s down the diagonal, a pair of letters
$ x_i, x_j $ is a commuting pair iff $ a_{i,j}  = 1 $.
This defines an equivalence relation and partitions $ \Sigma^*$ into equivalence classes, 
also known as traces. Thinking about the element of the alphabet as processes and traces as their scheduling,
commuting processes are considered as independent from each other. In this way the
theory of traces has been intensively studied in connection to parallel processes \cite{JanickiKKM17,JanickiKKM19}.
A (linearization of a) trace language is a union of some of these equivalence classes. Trace languages based on regular, linear and context-free languages (adding a partial commutativity relation to the language) were studied and accepted by various types of automata with translucent letters in \cite{LATAtrace,LINtrace,CFtrace}, respectively.
   Traces and trajectories are also analyzed in various grids \cite{MateescuRS98,MateescuSY00,IWCIA17}.
On the other hand, 
the {\em Cartier--Foata language} $\LL_A$
corresponding to the matrix $A$ of a partial commutativity relation is constructed by picking a representative word from each equivalence class.

Let us define a set $ F \subseteq \Sigma $ to be {\em commuting} if any pair of
letters in $ F $ commute.  Let $ {\cal C}(A) $ denote the collection of all nonempty commuting sets.
Denote by $w(F)$ the word obtained by juxtaposing the letters of $ F $.  The order in which
these letters are juxtaposed is immaterial since all arrangements are equivalent.

The central result is that the listing series $f(\LL_A)$  can be constructed
directly from the matrix $ A $:
\begin{equation}\label{CFexpansion}
f({\cal L}_{A}  )   =  \left( \sum_{ F \in {\cal C}(A) } (-1)^{\# F}  w (F) \right)^*
= \sum_{n \geq0} \left( \sum_{ F \in {\cal C}(A)} (-1)^{\# F}  w (F) \right)^n ~ ,
\end{equation}
where $\#F$ denotes the number of elements of $F$.

Over $ \Sigma=\{a,b\}$ where $a$ and $b$ commute, the Cartier-Foata theorem gives $\LL_A$ as
$(a+b - ba)^*$, which is to be interpreted as the weighted series
$ \lambda + (a+b-ba) + (a+b-ba)^2 + \cdots $ In this case the representatives of the equivalence classes
are seen to be the words in $ a^* b^*$. The essence of the theorem is that this is a listing series,
so there is exactly one representative word from each equivalence class that remains after
algebraic cancellations are carried out.

Similarly
over $\Sigma = \{ a, b ,c \}$ with $ a, b$ and $a,c$ commuting pairs, the listing series is
$ \lambda + (a+b+c-ba-ca) + (a+b+c-ba-ca)^2 + \cdots $



The words in this second language are generated by the signed grammar
$$
S \der \lambda \,|\, a S \,|\, bS \,|\, cS \,|\, -ba S \,|\, - ca S ~.
$$

\section{Conclusions and a conjecture}
Proposition \ref{diff} provides an expression for the
listing series of a language generated by a signed grammar in terms of weighted listed series of
two CFLs. However this result is short of a characterization in terms of CFLs.
It is also possible to change the way signed grammars generate languages by requiring $n_w \geq 1$ in
(\ref{defnw}) instead of equality. 
 In this way, every signed grammar would generate a language, and obviously, the class of generated languages would also change.
However, our consideration in this paper to allow only $0$ and $1$ to be the signed sum, gives a nice and immediate connection to Cartier-Foata languages in the regular case by special regular like expressions.

Since by signed grammars, we generate languages based on counting the number of (signed) derivation
trees, it is
straightforward to see the connection between our grammars and unambiguous CFLs.
On the other hand, there may be
more than one derivation tree for a given word $w$,
with the proviso that the algebraic sum of the yields of derivation trees
for it
has multiplicity  $n_w \in \{0,1\}$. Therefore signed grammars may
also generate ambiguous CFLs. In this sense, the bottom
of the hierarchy, the unambiguous CFLs are included in the class we have investigated.
On the other hand, if there are multiple derivation
trees for a word generated by a grammar, by playing with their signs, we have
a chance to somehow have their signed sum to be in $\{0,1\}$. Thus, it may be
possible to
generate languages that are higher in the hierarchy based on ambiguity.
However, this is still an open problem.

We have shown that signed grammars can generate languages that are
not context-free.
It would be of interest to use the fact that the languages generated by signed grammars are closed
under complementation  to show that signed grammars can generate inherently ambiguous CFLs.
One way to do this
would be to start with an unambiguous CFL whose complement is an inherently ambiguous CFL.
The standard examples of inherently ambiguous CFLs do not seem to have this property.
By the Chomsky-Sch\"{u}tzenberger
theorem \cite{ChomskySchutzenberger63} the generating function of an unambiguous CFL is algebraic.
Using the contrapositive and analytical methods, Flajolet \cite{Flajolet87} and later Koechlin \cite{Koechlin22}
devised ingenious methods to show the transcendence of the generating function of a given language to prove its
inherent ambiguity.
However if the generating function of $\LL$ is transcendental so is the generating function of its
complement $\overline{\LL}$. This means that one needs to look among inherently ambiguous
languages with algebraic generating functions
(e.g. $\{a^i b^j c^k ~|~ i=j ~\mbox{or}~ j = k \}$, see \cite[Proposition 14]{Koechlin22})
if the complement has any chance of being unambiguous.

So it would be nice to have an answer to the following question:
{\em Is there an unambiguous CFL whose complement is an inherently ambiguous CFL?}

A related problem of showing the existence of an inherently ambiguous CFL
whose complement is also an inherently ambiguous CFL was settled by Maurer \cite{Maurer70}.

\nocite{*}
\bibliographystyle{eptcs}
\bibliography{signed_grammars_bibliography}
\end{document}